\begin{document}
\title{ Computing the Skewness of the Phylogenetic Mean Pairwise Distance in Linear Time  }
\titlerunning{Skewness of the Phylogenetic Mean Pairwise Distance}
\author{Constantinos Tsirogiannis \and Brody Sandel}
\institute{MADALGO\thanks{Center for Massive Data Algorithmics, a Center of the Danish National Research Foundation.}
           and Department of Bioscience, Aarhus University, Denmark \\
           \email{\{constant, brody.sandel\}@cs.au.dk} }

%
%

\maketitle

\begin{abstract}  

The phylogenetic Mean Pairwise Distance ($\ensuremath{\mathrm{MPD}}$) is one of the most popular 
measures for computing the phylogenetic distance between a given group of species. 
More specifically, for a phylogenetic tree $\mathcal{T}$
and for a set of species $R$ represented by a subset of the leaf nodes of $\mathcal{T}$,
the $\ensuremath{\mathrm{MPD}}$ of $R$ is equal to the average cost of all possible simple paths in $\mathcal{T}$
that connect pairs of nodes in $R$. 

Among other phylogenetic measures, the $\ensuremath{\mathrm{MPD}}$ is used as a tool for deciding
if the species of a given group $R$ are closely related. To do this, it is important
to compute not only the value of the $\ensuremath{\mathrm{MPD}}$ for this group but also the 
expectation, the variance, and the skewness of this metric. Although 
efficient algorithms have been developed for computing the expectation 
and the variance the $\ensuremath{\mathrm{MPD}}$, there has been no approach so far for computing 
the skewness of this measure.

In the present work we describe how to compute the skewness of the $\ensuremath{\mathrm{MPD}}$ 
on a tree $\mathcal{T}$ optimally, in $\Theta(n)$ time; here $n$ is the size of the tree $\mathcal{T}$. 
So far this is the first result that leads to an exact, let alone efficient, 
computation of the skewness for any popular phylogenetic distance measure. 
Moreover, we show how we can compute in $\Theta(n)$ time several interesting 
quantities in $\mathcal{T}$ that can be possibly used as building blocks for computing 
efficiently the skewness of other phylogenetic measures.
  
\end{abstract}

\thispagestyle{empty}

\section{Introduction}

Communities of co-occuring species may be 
described as ``clustered'' if species in the community tend to be 
close phylogenetic relatives of one another, 
or ``overdispersed'' if they are distant relatives~\cite{webb1}. 
To define these terms we need a function that measures the phylogenetic 
relatedness of a set of species, and also a point of reference for how this 
function should behave in the absence of ecological and evolutionary 
processes. One such function is the mean pairwise distance ($\ensuremath{\mathrm{MPD}}$);
given a phylogenetic tree $\mathcal{T}$ and a subset of species $R$ that are represented
by leaf nodes of $\mathcal{T}$, the $\ensuremath{\mathrm{MPD}}$ of the species in $R$ is equal to
average cost of all possible simple paths that connect pairs of nodes in $R$.  

To decide if the value of the $\ensuremath{\mathrm{MPD}}$ for a specific set of species $R$ is large or small,
we need to know the average value (expectation) of the $\ensuremath{\mathrm{MPD}}$ for all sets of species 
in $\mathcal{T}$ that consist of exactly $r=|R|$ species. To judge how much larger or 
smaller is this value from the average, we also need to know the standard 
deviation of the $\ensuremath{\mathrm{MPD}}$ for all possible sets of $r$ species in $\mathcal{T}$.
Putting all these values together, we get the following index that expresses 
how clustered are the species in $R$~\cite{webb1}:
\[
\ensuremath{\mathrm{NRI}} = \frac{ \ensuremath{\mathrm{MPD}}(\mathcal{T},R) - \ensuremath{\mathrm{expec}}_{\ensuremath{\mathrm{MPD}}}(\mathcal{T},r) }{sd_\ensuremath{\mathrm{MPD}}(\mathcal{T},r)},
\]

where  $\ensuremath{\mathrm{MPD}}(\mathcal{T},R)$ is the value of the $\ensuremath{\mathrm{MPD}}$ for $R$ in $\mathcal{T}$, and $\ensuremath{\mathrm{expec}}(\mathcal{T})$ 
and $sd_\ensuremath{\mathrm{MPD}}(\mathcal{T},r)$ are the expected value and the standard deviation respectively of the 
$\ensuremath{\mathrm{MPD}}$ calculated over all subsets of $r$ species in $\mathcal{T}$. 

 In a previous paper we presented optimal algorithms for computing the expectation and the 
standard deviation of the $\ensuremath{\mathrm{MPD}}$ of a phylogenetic tree $\mathcal{T}$ in $\Theta(n)$ time, where~$n$ 
is the number of the edges of $\mathcal{T}$~\cite{ourpaper}. This enabled exact computations of these statistical
moments of the $\ensuremath{\mathrm{MPD}}$ on large trees, which were previously infeasible using traditional 
slow and inexact resampling techniques. However, one important problem remained unsolved; 
quantifying our degree of confidence that the $\ensuremath{\mathrm{NRI}}$ value observed in a community reflects 
non-random ecological and evolutionary processes. 

This degree of confidence is a statistical $P$ value, that is 
the probability that we would observe an $\ensuremath{\mathrm{NRI}}$ value as extreme or more so if the community were randomly
assembled. Traditionally, estimating $P$ is accomplished by ranking the observed
$\ensuremath{\mathrm{MPD}}$ against the distribution of randomized $\ensuremath{\mathrm{MPD}}$ values~\cite{pontarp}. If the
$\ensuremath{\mathrm{MPD}}$ falls far enough into one of the tails of the distribution (generally below the 2.5
percentile or above the 97.5 percentile, yielding $P < 0.05$), the community is said to be
significantly overdispersed or significantly clustered. However, this approach relies
on sampling a large number of random subsets of species in $\mathcal{T}$, and recomputing the 
$\ensuremath{\mathrm{MPD}}$ for each random subset. Therefore, this method is slow and imprecise. 

 We can approximate the $P$ value of an observed $\ensuremath{\mathrm{NRI}}$ by assuming a particular distribution
of the possible $\ensuremath{\mathrm{MPD}}$ values and evaluating its cumulative distribution function at the
observed $\ensuremath{\mathrm{MPD}}$. Because the $\ensuremath{\mathrm{NRI}}$ measures the difference between the observed values and
expectation in units of standard deviations, this yields a very simple rule if we assume
that possible $\ensuremath{\mathrm{MPD}}$ values are normally distributed: any $\ensuremath{\mathrm{NRI}}$ value larger than $1.96$ or 
smaller than $-1.96$ is significant. Unfortunately, the distribution of $\ensuremath{\mathrm{MPD}}$ values is 
often skewed, such that this simple rule will lead to incorrect $P$ value estimates~\cite{cooper,vamosi}. 
Of particular concern, this skewness introduces a bias towards 
detecting either significant clustering or significant overdispersion~\cite{harmon}. 
Calculating this skewness analytically would enable us to remove this bias and 
improve the accuracy of $P$ value estimates obtained analytically. However, so far there has been
no result in the related literature that shows how to compute this skewness value.

Hence, given a phylogenetic tree $\mathcal{T}$ and an integer $r$ there is the need to design 
an efficient and exact algorithm that can compute the skewness of the $\ensuremath{\mathrm{MPD}}$ 
for $r$ species in $\mathcal{T}$. This would provide the last critical piece required for the adoption 
of a fully analytical and efficient approach for analysing ecological communities using 
the $\ensuremath{\mathrm{MPD}}$ and the $\ensuremath{\mathrm{NRI}}$.

\paragraph{Our Results} In the present work we tackle the problem of computing efficiently 
the skewness of the $\ensuremath{\mathrm{MPD}}$. More specifically, given a tree $\mathcal{T}$ that consists of $n$ edges 
and a positive integer $r$, we prove that we can compute the skewness of
of the $\ensuremath{\mathrm{MPD}}$ over all subsets of $r$ leaf nodes in $\mathcal{T}$ optimally, in $\Theta(n)$ time.
For the calculation of this skewness value we consider that every subset of exactly $r$ species 
in $\mathcal{T}$ is picked uniformly out of all possible subsets that have $r$ species.
The main contribution of this paper is a constructive proof that leads straightforwardly 
to an algorithm that computes the skewness of the $\ensuremath{\mathrm{MPD}}$ in $\Theta(n)$ time. 
This is clearly very efficient, especially if we consider that it outperforms the best 
algorithms that are known so far for computing lower-order statistics for other phylogenetic measures;
for example the most efficient known algorithm for computing the variance of 
the popular Phylogenetic Distance ($\ensuremath{\mathrm{PD}}$) runs in $O(n^2)$ time~\cite{ourpaper}. 

More than that, we prove how we can compute in $\Theta(n)$ time several quantities that are related
with groups of paths in the given tree; these quantities can be possibly used 
as building blocks for computing efficiently the skewness (and other statistical moments) 
of phylogenetic measures that are similar to the $\ensuremath{\mathrm{MPD}}$. Such an example is the measure 
which is the equivalent of the $\ensuremath{\mathrm{MPD}}$ for computing the distance between two subsets of 
species in $\mathcal{T}$~\cite{swenson}.

The rest of this paper is, in its entirety, an elaborate proof for computing the 
skewness of the $\ensuremath{\mathrm{MPD}}$ on a tree $\mathcal{T}$ in $\Theta(n)$ time. In the next section 
we define the problem that we want to tackle, and we present a group of quantities
that we use as building blocks for computing the skewness of the $\ensuremath{\mathrm{MPD}}$. We prove 
that all of these quantities can be computed in linear time with respect to the size
of the input tree. In Section~\ref{sec::mpd_proof} we provide the main proof of this 
paper; there we show how we can express the value of the skewness of the $\ensuremath{\mathrm{MPD}}$ in 
terms of the quantities that we introduced earlier. The proof implies a straightforward 
linear time algorithm for the computation of the skewness as well.

\section{ Description of the Problem and Basic Concepts }\label{sec::basic}

\paragraph{Definitions and Notation} Let $\mathcal{T}$ be a phylogenetic tree, and let $E$ be the set of its edges.
We denote the number of the edges in $\mathcal{T}$ by $n$, that is $n = |E|$. 
For an edge $e \in E$, we use $w_e$ to indicate the weight of this edge.
We use $S$ to denote the set of the leaf nodes of $\mathcal{T}$. 
We call these nodes the \emph{tips} of the tree, and we use $s$ to denote
the number of these nodes.    

Since a phylogenetic tree is a rooted tree, for any edge $e \in E$ we distinguish the two 
nodes adjacent to $e$ into a \emph{parent} node and a \emph{child} node; among these two, 
the parent node of $e$ is the one for which the simple path from this node to the root 
does not contain $e$. We use $\ensuremath{\mathrm{Ch}}(e)$ to indicate the set of edges whose parent node is
the child node of $e$, which of course implies that $e \notin \ensuremath{\mathrm{Ch}}(e)$. We indicate the
edge whose child node is the parent node of $e$ by $\ensuremath{\mathrm{parent}}(e)$. 
For any edge $e \in E$, tree $\mathcal{T}(e)$ is the subtree of $\mathcal{T}$ whose 
root is the child node of edge $e$. We denote the set of tips that appear in 
$\mathcal{T}(e)$ as $S(e)$, and we denote the number of these tips by $s(e)$.

Given any edge $e \in E$, we partition the edges of $\mathcal{T}$ into three subsets. The first subset
consists of all the edges that appear in the subtree of $e$. We denote this set by $\ensuremath{\mathrm{Off}}(e)$. 
The second subset consists of all edges $e' \in E$ for which $e$ appears in the subtree of $e'$. 
We use $\ensuremath{\mathrm{Anc}}(e)$ to indicate this subset. For the rest of this paper, we define that $e \in \ensuremath{\mathrm{Anc}}(e)$, 
and that $e \notin \ensuremath{\mathrm{Off}}(e)$. The third subset contains all the tree edges that do not appear 
neither in $\ensuremath{\mathrm{Off}}(e)$, nor in $\ensuremath{\mathrm{Anc}}(e)$; we indicate this subset by $\ensuremath{\mathrm{Ind}}(e)$.

For any two tips $u,v \in S$, we use $p(u,v)$ to indicate the simple path in $\mathcal{T}$ between these nodes.
Of course, the path $p(u,v)$ is unique since $\mathcal{T}$ is a tree. We use $cost(u,v)$ to denote the cost 
of this path, that is the sum of the weights of all the edges that appear on the path. 
Let $u$ be a tip in $S$ and let $e$ be an edge in $E$. We use $cost(u,e)$ to represent
the cost of the shortest simple path between $u$ and the child node of $e$. Therefore,
if $u \in S(e)$ this path does not include $e$, otherwise it does.  
For any subset $R \subseteq S$ of the tips of the tree $\mathcal{T}$,  
we denote the set of all pairs of elements in $R$, that is the set of all 
combinations that consist of two distinct tips in $R$, by $\Delta(R)$. 
Given a phylogenetic tree $\mathcal{T}$ and a subset of its tips $R \subseteq S$,
we denote the Mean Pairwise Distance of $R$ in $\mathcal{T}$ by $\ensuremath{\mathrm{MPD}}(\mathcal{T}, R)$. 
Let $r = |R|$. This measure is equal to:
%
\[\ensuremath{\mathrm{MPD}}(\mathcal{T}, R) = \frac{2}{r(r-1)}\sum_{ \{u,v\} \in \Delta(R)}cost(u,v) \ .  \]
%

\subsection{ Aggregating the Costs of Paths } 

Let $\mathcal{T}$ be a phylogenetic tree that consists of $n$ edges and $s$ tips, and let
$r$ be a positive integer such that $r \leq s$. We use $\ensuremath{\mathrm{sk}}(\mathcal{T},r)$ to denote the skewness
of the $\ensuremath{\mathrm{MPD}}$ on $\mathcal{T}$ when we pick a subset of $r$ tips of this tree with
uniform probability. In the rest of this paper we describe 
in detail how we can compute $\ensuremath{\mathrm{sk}}(\mathcal{T},r)$ in $O(n)$ time, by scanning $\mathcal{T}$ only a 
constant number of times. Based on the formal definition
of the concept of skewness, the value of $\ensuremath{\mathrm{sk}}(\mathcal{T},r)$ is equal to:
%

\begin{align}
 \ensuremath{\mathrm{sk}}(\mathcal{T},r) & =  E_{R \in \mathrm{Sub}(S, r)}\left[ \left(\frac{\ensuremath{\mathrm{MPD}}(\mathcal{T},R)-\ensuremath{\mathrm{expec}}(\mathcal{T},r)}{\ensuremath{\mathrm{var}}(\mathcal{T},r)}\right)^3 \right] \hfill\notag\\[.2em]
& = \frac{E_{R \in \mathrm{Sub}(S, r)}[\ensuremath{\mathrm{MPD}}^3(\mathcal{T},R)]-3 \cdot \ensuremath{\mathrm{var}}(\mathcal{T},r)^2-\ensuremath{\mathrm{expec}}(\mathcal{T},r)^3}{\ensuremath{\mathrm{var}}(\mathcal{T},r)^3} \ , \label{eq::skewness_basic} 
\end{align}
%
where $\ensuremath{\mathrm{expec}}(\mathcal{T},r)$ and $\ensuremath{\mathrm{var}}(\mathcal{T},r)$ are the expectation and the variance of the $\ensuremath{\mathrm{MPD}}$ for
subsets of exactly $r$ tips in $\mathcal{T}$, and $E_{R \in \mathrm{Sub}(S, r)}[\cdot]$ denotes the function of the expectation
over all subsets of exactly $r$ tips in $S$. 
In a previous paper, we showed how we can compute the expectation and the variance of the $\ensuremath{\mathrm{MPD}}$ on $\mathcal{T}$ in $O(n)$ 
time~\cite{ourpaper}. Therefore, in the rest of this work we focus on analysing the value
$E_{R \in \mathrm{Sub}(S, r)}[\ensuremath{\mathrm{MPD}}^3(\mathcal{T},R)]$ and expressing this quantity in a way that can be computed efficiently, 
in linear time with respect to the size of $\mathcal{T}$. 

To make things more simple, we break the 
description of our approach into two parts; in the first part, we define several 
quantities that come from adding and multiplying the costs of specific subsets of paths between 
tips of the tree. We also present how we can compute all these quantities in $O(n)$ time
in total by scanning $\mathcal{T}$ a constant number of times. Then, in Section~\ref{sec::mpd_proof}, 
we show how we can express the skewness of the $\ensuremath{\mathrm{MPD}}$ on $\mathcal{T}$ based on 
these quantities, and hence compute the skewness in $O(n)$ time as well. 
Next we provide the quantities that we want to consider in our analysis; these quantities
are described in Table~\ref{tab::quantities}. 

\begin{table}[h!]
\caption{The quantities that we use for expressing the skewness of the $\ensuremath{\mathrm{MPD}}$.} \label{tab::quantities}
\resizebox{\linewidth}{!}{
\begin{tabular}{|l||l|}
\hline
I) \( \displaystyle \ensuremath{\mathrm{TC}}(\mathcal{T}) = \sum_{\{u,v\} \in \Delta(S)} cost(u,v) \) &
II) \( \displaystyle \ensuremath{\mathrm{CB}}(\mathcal{T}) = \sum_{\{u,v\} \in \Delta(S)} cost^3(u,v) \)
\\
\hline
III) \( \displaystyle \forall e \in E, \ \ensuremath{\mathrm{TC}}(e) = \sum_{ \substack{ \{u,v\} \in \Delta(S) \\ e \in p(u,v) }} 
        cost(u,v)  \)
  & 
IV) \( \displaystyle \forall e \in E, \ \ensuremath{\mathrm{SQ}}(e) = 
      \sum_{ \substack{ \{u,v\} \in \Delta(S) \\ e \in p(u,v) }} cost^2(u,v) \)
\\
\hline
V) \( \displaystyle \forall e \in E, \ \ensuremath{\mathrm{Mult}}(e) = 
      \sum_{ \substack{ \{u,v\} \in \Delta(S) \\ e \in p(u,v) }} \ensuremath{\mathrm{TC}}(u) \cdot \ensuremath{\mathrm{TC}}(v) \)
  &
VI) \( \displaystyle \forall u \in S, \ \ensuremath{\mathrm{SM}}(u) = \sum_{v\in S \setminus \{u\}}cost(u,v) \cdot \ensuremath{\mathrm{TC}}(v)  \)
\\
\hline
VII) \( \displaystyle \forall e \in E, \ \ensuremath{\mathrm{TC_{sub}}}(e) = \sum_{ \substack{ u \in S(e) }} cost(u,e) \)
 &
VIII) \( \displaystyle \forall e \in E, \ \ensuremath{\mathrm{SQ_{sub}}}(e) = \sum_{ \substack{ u \in S(e) }} cost^2(u,e) \)
\\
\hline
IX) \( \displaystyle \forall e \in E, \ \ensuremath{\mathrm{PC}}(e) = \sum_{ \substack{ u \in S }} cost(u,e) \)
 &    
X) \( \displaystyle \forall e \in E, \ \ensuremath{\mathrm{PSQ}}(e) = \sum_{ \substack{ u \in S }} cost^2(u,e) \)
\\\hline
XI) \( \displaystyle \forall e \in E, \ \ensuremath{\mathrm{QD}}(e) = \sum_{ u \in S(e) } 
       \left( \sum_{v \in S(e) \setminus \{ u \} } cost(u,v) \right)^{2} \)
 &\phantom{$\Bigg(^{\Big)}_\Big)$} \\\hline
\end{tabular}}
\end{table}

For any tip $u \in S$, we define that $\ensuremath{\mathrm{SQ}}(u) = \ensuremath{\mathrm{SQ}}(e)$, and $\ensuremath{\mathrm{TC}}(u) = \ensuremath{\mathrm{TC}}(e)$,
where $e$ is the edge whose child node is $u$. 
The proof of the following lemma is provided in the full version of this paper.

\begin{lemma}~\label{le::fast_quantities}
Given a phylogenetic tree $\mathcal{T}$ that consists of $n$ edges, we can compute all the 
quantities that are presented in Table~\ref{tab::quantities} in $O(n)$ time in total.
\end{lemma}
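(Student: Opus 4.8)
The plan is to compute every entry of Table~\ref{tab::quantities} using a constant number of traversals of $\mathcal{T}$, spending $O(1)$ amortised time per edge. Almost every decomposition rests on one observation: if $e\in p(u,v)$ then exactly one of $u,v$, say $u$, lies in $S(e)$, and then $cost(u,v)=cost(u,e)+cost(v,e)$. I would begin with a single bottom-up pass that computes, for every edge $e$, the values $s(e)$, $\ensuremath{\mathrm{TC_{sub}}}(e)$ and $\ensuremath{\mathrm{SQ_{sub}}}(e)$ (entries VII and VIII): since $cost(u,e)=cost(u,e')+w_{e'}$ for $u\in S(e')$ with $e'\in\ensuremath{\mathrm{Ch}}(e)$, we get $\ensuremath{\mathrm{TC_{sub}}}(e)=\sum_{e'\in\ensuremath{\mathrm{Ch}}(e)}(\ensuremath{\mathrm{TC_{sub}}}(e')+s(e')w_{e'})$ and $\ensuremath{\mathrm{SQ_{sub}}}(e)=\sum_{e'\in\ensuremath{\mathrm{Ch}}(e)}(\ensuremath{\mathrm{SQ_{sub}}}(e')+2w_{e'}\ensuremath{\mathrm{TC_{sub}}}(e')+s(e')w_{e'}^2)$, with $s(e)=1$ and $\ensuremath{\mathrm{TC_{sub}}}(e)=\ensuremath{\mathrm{SQ_{sub}}}(e)=0$ at a leaf edge.

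Next I would run one top-down pass for the ``whole-tree'' sums $\ensuremath{\mathrm{PC}}(e)$ and $\ensuremath{\mathrm{PSQ}}(e)$ (entries IX and X). With $f=\ensuremath{\mathrm{parent}}(e)$, a tip of $S(e)$ has $cost(u,e)=cost(u,f)-w_e$ while every other tip has $cost(u,e)=cost(u,f)+w_e$; substituting gives $\ensuremath{\mathrm{PC}}(e)=\ensuremath{\mathrm{PC}}(f)+(s-2s(e))w_e$ and an analogous $O(1)$ identity for $\ensuremath{\mathrm{PSQ}}(e)$ in terms of $\ensuremath{\mathrm{PSQ}}(f)$, $\ensuremath{\mathrm{PC}}(f)$, $\ensuremath{\mathrm{TC_{sub}}}(e)$, $s(e)$ and $w_e$; the base case, for the edges just below the root, is the root sum $\sum_{u\in S}cost(u,\text{root})$, which is read off the bottom-up pass. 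Once these are available, entries III and IV fall out edge by edge: splitting the pairs crossing $e$ as the product $S(e)\times(S\setminus S(e))$ and using $cost(u,v)=cost(u,e)+cost(v,e)$ yields $\ensuremath{\mathrm{TC}}(e)=(s-2s(e))\ensuremath{\mathrm{TC_{sub}}}(e)+s(e)\ensuremath{\mathrm{PC}}(e)$ and a similar quadratic identity for $\ensuremath{\mathrm{SQ}}(e)$. Entry I is then $\ensuremath{\mathrm{TC}}(\mathcal{T})=\sum_{e\in E}w_e\,s(e)(s-s(e))$, and for entry II I would swap the summation order the other way, $\ensuremath{\mathrm{CB}}(\mathcal{T})=\sum_{\{u,v\}}cost^2(u,v)\sum_{e\in p(u,v)}w_e=\sum_{e\in E}w_e\,\ensuremath{\mathrm{SQ}}(e)$, so both are one further linear scan over $E$. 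Entry V is handled by the trivial bottom-up sum $\Sigma(e)=\sum_{u\in S(e)}\ensuremath{\mathrm{TC}}(u)$ of the already-computed leaf values $\ensuremath{\mathrm{TC}}(u)$, after which $\ensuremath{\mathrm{Mult}}(e)=\Sigma(e)(\Sigma(\text{root})-\Sigma(e))$. Entry VI, $\ensuremath{\mathrm{SM}}(u)=\sum_{v\neq u}cost(u,v)\,\ensuremath{\mathrm{TC}}(v)$, is exactly the problem of computing, for every tip at once, a weighted sum of tree distances from $u$ to all other tips with the fixed weights $\ensuremath{\mathrm{TC}}(v)$; this is solved by the standard re-rooting technique -- one bottom-up pass maintaining the weighted distance to the tips below each node, then one top-down pass that adds the contribution of the tips outside the current subtree.

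The delicate entry is XI, $\ensuremath{\mathrm{QD}}(e)=\sum_{u\in S(e)}g_e(u)^2$ with $g_e(u)=\sum_{v\in S(e)\setminus\{u\}}cost(u,v)$, since evaluating the $g_e(u)$ one at a time would cost $\Theta(n^2)$ in the worst case. The plan is to close $\ensuremath{\mathrm{QD}}$ under a joint bottom-up recursion together with the auxiliary quantities $G(e)=\sum_{u\in S(e)}g_e(u)$ and $H(e)=\sum_{u\in S(e)}g_e(u)\,cost(u,e)$. The key algebraic fact is that if $e$ has children $e_1,\dots,e_k$ then, for $u\in S(e_i)$, $g_e(u)=g_{e_i}(u)+m_i\,cost(u,e_i)+\alpha_i$, where $m_i=s(e)-s(e_i)$ is the number of tips of $\mathcal{T}(e)$ outside $\mathcal{T}(e_i)$ and $\alpha_i=w_{e_i}m_i+\ensuremath{\mathrm{TC_{sub}}}(e)-s(e_i)w_{e_i}-\ensuremath{\mathrm{TC_{sub}}}(e_i)$ is a scalar determined by $e_i$ and already-computed subtree aggregates. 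Expanding $(g_{e_i}(u)+m_i\,cost(u,e_i)+\alpha_i)^2$ and summing over $u\in S(e_i)$ expresses the contribution of $\mathcal{T}(e_i)$ to $\ensuremath{\mathrm{QD}}(e)$ entirely through $\ensuremath{\mathrm{QD}}(e_i),G(e_i),H(e_i),\ensuremath{\mathrm{TC_{sub}}}(e_i),\ensuremath{\mathrm{SQ_{sub}}}(e_i),s(e_i)$ and the scalars $m_i,\alpha_i$; the companion recursions for $G(e)$ and $H(e)$ arise the same way, by summing $g_e(u)$ and $g_e(u)\,cost(u,e)$ over $u\in S(e_i)$ (using $cost(u,e)=cost(u,e_i)+w_{e_i}$), with $G=H=\ensuremath{\mathrm{QD}}=0$ at a leaf edge. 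Here I would be careful to compute $s(e)$ and $\ensuremath{\mathrm{TC_{sub}}}(e)$ for the current edge before processing its children's contributions, so that each $m_i$ and $\alpha_i$ costs $O(1)$ and the work at $e$ is $O(|\ensuremath{\mathrm{Ch}}(e)|)$, hence $O(n)$ over the pass. Collecting everything -- one compound bottom-up pass, one top-down pass, a couple of linear scans over $E$, and the re-rooting pass for $\ensuremath{\mathrm{SM}}$ -- gives $O(n)$ time in total. I expect the main obstacle to be precisely this last step: isolating the right pair of auxiliary quantities $G$ and $H$ so that $\ensuremath{\mathrm{QD}}$ obeys a constant-size recursion, and checking that the per-node work stays linear in the degree rather than quadratic.
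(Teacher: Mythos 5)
The paper defers the proof of this lemma to its full version, so there is no in-paper argument to compare against; judged on its own, your construction is correct and complete: every entry of Table~\ref{tab::quantities} is reduced, via the decomposition $cost(u,v)=cost(u,e)+cost(v,e)$ for pairs separated by $e$, to constant-size recurrences evaluated in a constant number of bottom-up and top-down passes, and the one genuinely delicate entry, $\mathrm{QD}(e)$, is handled by a valid closed recursion on the triple $(\mathrm{QD},G,H)$ with $O(|\mathrm{Ch}(e)|)$ work per edge (I checked the shift $g_e(u)=g_{e_i}(u)+m_i\,cost(u,e_i)+\alpha_i$ and your formula for $\alpha_i$; both are right). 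This is exactly the style of aggregation the authors themselves rely on in Section~\ref{sec::mpd_proof}, e.g.\ the bottom-up accumulation of $\sum_{u\in S(e)}\mathrm{SQ}(u)$, so I see no gap.
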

\section{Computing the Skewness of the MPD}\label{sec::mpd_proof}

In the previous section we defined the problem of computing the skewness of the $\ensuremath{\mathrm{MPD}}$
for a given phylogenetic tree $\mathcal{T}$. Given a positive integer $r \leq s$, 
we showed that to solve this problem efficiently it remains to find an efficient algorithm 
for computing $E_{R \in \mathrm{Sub}(S, r)}[\ensuremath{\mathrm{MPD}}^3(\mathcal{T},R)]$; this is the mean value of the cube
of the $\ensuremath{\mathrm{MPD}}$ among all possible subsets of tips in $\mathcal{T}$ that consist of exactly $r$ elements.
To compute this efficiently, we introduced in Table~\ref{tab::quantities} ten different 
quantities which we want to use in order to express this mean value. In 
Lemma~\ref{le::fast_quantities} we proved that these quantities can be computed in $O(n)$ time,
where $n$ is the size of $\mathcal{T}$. 

Next we prove how we can calculate the value for the 
mean of the cube of the $\ensuremath{\mathrm{MPD}}$ based on the quantities in Table~\ref{tab::quantities}. 
In particular, in the proof of the following lemma we show how the value $E_{R \in \mathrm{Sub}(S, r)}[\ensuremath{\mathrm{MPD}}^3(\mathcal{T},R)]$ 
can be written analytically as an expression that contains the quantities in 
Table~\ref{tab::quantities}. This expression can then be straightforwardly 
evaluated in $O(n)$ time, given that we have already computed the aforementioned 
quantities~\footnote{Because the full form of this expression is very long (it consists of 
a large number of terms), we have chosen not to include it in the definition of the following lemma. 
We chose to do so because we considered that including the entire expression would not 
make this work more readable. In any case, the full expression can be easily infered from the proof of 
the lemma}.

\begin{lemma}\label{le::main_lemma}
For any given natural $r \leq s$, we can compute $E_{R \in \mathrm{Sub}(S, r)}[\ensuremath{\mathrm{MPD}}^3(\mathcal{T},R)]$ in $\Theta(n)$ time.
\end{lemma}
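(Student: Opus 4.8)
The plan is to expand $\ensuremath{\mathrm{MPD}}^3(\mathcal{T},R)$ directly and then push the expectation operator inside. Since
$\ensuremath{\mathrm{MPD}}(\mathcal{T},R) = \frac{2}{r(r-1)}\sum_{\{u,v\}\in\Delta(R)} cost(u,v)$, cubing gives
$\ensuremath{\mathrm{MPD}}^3(\mathcal{T},R) = \frac{8}{r^3(r-1)^3}\sum cost(u_1,v_1)\,cost(u_2,v_2)\,cost(u_3,v_3)$, where the sum ranges over ordered triples of (unordered) pairs from $\Delta(R)$. Taking $E_{R\in\mathrm{Sub}(S,r)}[\cdot]$, linearity lets me move the expectation inside the sum over triples of pairs \emph{drawn from $S$}; for each fixed triple of pairs, the summand $cost(u_1,v_1)cost(u_2,v_2)cost(u_3,v_3)$ is a constant, and what remains is the probability that all the tips involved in the triple lie in a random $r$-subset $R$. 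That probability depends only on $k$, the number of \emph{distinct} tips appearing among the (at most six) endpoints, and equals $\binom{s-k}{r-k}/\binom{s}{r}$. So the whole expectation breaks into a sum of contributions indexed by the combinatorial ``pattern'' of how the three pairs share endpoints.

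Next I would enumerate these patterns. A triple of pairs can involve $k\in\{2,3,4,5,6\}$ distinct tips, and for each $k$ there are finitely many isomorphism types of the multigraph on those tips formed by the three edges (three disjoint edges; two sharing a vertex plus one disjoint; a path of length three; a triangle; a ``star'' $K_{1,3}$; a double edge plus a pendant; a triple edge; etc.). For each pattern I need to evaluate $\sum cost(u_1,v_1)cost(u_2,v_2)cost(u_3,v_3)$ over all placements of that pattern in $\mathcal{T}$, and show this sum is computable in $\Theta(n)$ time. This is exactly where the quantities of Table~\ref{tab::quantities} enter: e.g.\ three disjoint edges contribute (after inclusion--exclusion to enforce disjointness) powers of $\ensuremath{\mathrm{TC}}(\mathcal{T})$ corrected by terms like $\sum_u \ensuremath{\mathrm{SM}}(u)$ and $\ensuremath{\mathrm{CB}}(\mathcal{T})$; a triple edge gives $\ensuremath{\mathrm{CB}}(\mathcal{T})$; a path $u\!-\!v\!-\!w$ gives $\sum_v \bigl(\sum_u cost(u,v)\bigr)^2$-type sums, which is what $\ensuremath{\mathrm{SM}}$, $\ensuremath{\mathrm{PC}}$, $\ensuremath{\mathrm{PSQ}}$, $\ensuremath{\mathrm{QD}}$ are built to capture; a star at $v$ gives $\sum_v (\ensuremath{\mathrm{TC}}(v))^{?}$-style sums. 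The subtree-restricted quantities $\ensuremath{\mathrm{TC_{sub}}},\ensuremath{\mathrm{SQ_{sub}}},\ensuremath{\mathrm{QD}}$ and the path-through-edge quantities $\ensuremath{\mathrm{TC}}(e),\ensuremath{\mathrm{SQ}}(e),\ensuremath{\mathrm{Mult}}(e)$ let me decompose $cost(u,v)$ along edges, so that a sum over a pattern that ``routes through'' a common edge can be reorganized as a sum over edges $e$ of a product of two independent contributions from the two sides of $e$.

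I would organize the bookkeeping by first writing every $cost$ product in terms of sums over \emph{ordered} tuples of distinct tips (easier to handle), getting each such ordered-distinct sum via inclusion--exclusion from the corresponding unrestricted sums, and only at the end collecting coefficients and multiplying by the pattern's probability $\binom{s-k}{r-k}/\binom{s}{r}$. Since there are only $O(1)$ patterns and each pattern's sum is one of the Table~\ref{tab::quantities} quantities (or a constant-size combination of them), the total is a fixed closed-form expression evaluable in $\Theta(n)$; combined with Lemma~\ref{le::fast_quantities} this gives the claimed bound, and the $\Omega(n)$ lower bound is trivial since the answer depends on all edge weights.

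The main obstacle is the sheer combinatorial explosion of the middle step: correctly enumerating all isomorphism types of three-edge multigraphs (including degenerate coincidences among the six endpoints), and for each one deriving the exact inclusion--exclusion expansion that expresses its weighted count through the Table~\ref{tab::quantities} quantities without double-counting. Getting every coefficient and every overlap-correction term right — especially distinguishing, say, a path $u\!-\!v\!-\!w$ from a triangle $u\!-\!v\!-\!w\!-\!u$ when $cost$ is additive along tree paths — is the delicate part; the rest is mechanical algebra.
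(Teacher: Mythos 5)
Your proposal follows essentially the same route as the paper: expand the cube into a triple sum over pairs, replace the indicator expectation by the falling-factorial ratio $(r)_k/(s)_k=\binom{s-k}{r-k}/\binom{s}{r}$ depending only on the number $k$ of distinct endpoints, classify the triples of pairs into the finitely many isomorphism types of their sharing pattern (the paper's eight classes $A$--$H$, viewed there as tripartite similarity graphs rather than your multigraphs on the distinct tips, an equivalent classification), and express each class's weighted sum through the Table~\ref{tab::quantities} quantities via inclusion--exclusion and edge-wise decomposition of path costs. The only difference is that you leave the per-class algebra (notably the triangle class, which is the paper's longest derivation) as acknowledged mechanical work, but the ideas you supply for it --- routing through a common edge and splitting into contributions from the two sides --- are exactly the ones the paper uses.
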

\begin{proof}
The expectation of the cube of the $\ensuremath{\mathrm{MPD}}$ is equal to:\\
%
\begin{align*}
&E_{R \in \mathrm{Sub}(S, r)}[\ensuremath{\mathrm{MPD}}^3(\mathcal{T},R)] = \frac{8}{r^3(r-1)^3} \ \ \cdot \\
&E_{R \in \mathrm{Sub}(S, r)} \left[
\sum_{{\{u,v\}\in\Delta(R)}} \hspace{0.05in} \sum_{{\{x,y\}\in\Delta(R)}}
\hspace{0.05in}\sum_{{\{c,d\}\in\Delta(R)}}
cost(u,v)\cdot cost(x,y)\cdot cost(c,d)\right].
\end{align*}
%
From the last expression we get:
\begin{align}
& E_{R \in \mathrm{Sub}(S, r)} \Bigg[
\sum_{\{u,v\}\in\Delta(R)} \hspace{0.05in} \sum_{ \{x,y\} \in \Delta(R) }
\hspace{0.05in} \sum_{ \{c,d\} \in \Delta(R) }
cost(u,v) \cdot cost(x,y) \cdot cost(c,d) \Bigg]\nonumber \\[.2em]
&=\hspace{-.7em}\sum_{{\{u,v\}\in\Delta(S)}} \sum_{{\{x,y\}\in\Delta(S)}} \sum_{{\{c,d\}\in\Delta(S)}}
\!\!cost(u,\!v)\!\cdot \!cost(x,\!y) \!\cdot \!cost(c,d) \ \ \cdot \nonumber\\[.2em]
& \ \ E_{R \in \mathrm{Sub}(S, r)}[AP_{\!R}(u,\!v,x,y,c,d)]\,,\label{eq::first_big_sum}
\end{align}
where $AP_R(u,v,x,y,c,d)$ is a random variable whose value is equal to one in the case that 
$u,v,x,y,c,d \in R$, otherwise it is equal to zero. For any six tips $u,v,x,y,c,d \in S$, which 
may not be all of them distinct, we use
$\theta(u,v,x,y,c,d)$ to denote the number of distinct elements among these tips.
Let $t$ be an integer, and let $(t)_k$ denote the $k$-th falling factorial power of $t$, 
which means that $(t)_k = t(t-1)\ldots(t-k+1)$.
For the expectation of the random variables that appear in the last expression it holds that:
%
\begin{align}
& E_{R \in \mathrm{Sub}(S, r)}\left[ AP_R(u,v,x,y,c,d)  \right]  
= \frac{(r)_{\theta(u,v,x,y,c,d)}}{(s)_{\theta(u,v,x,y,c,d)}} \label{eq::apr_theta}
\end{align}
%
Notice that in~(\ref{eq::apr_theta}) we have $ 2 \leq \theta(u,v,x,y,c,d) \leq 6$. The value
of the function $\theta(\cdot)$ cannot be smaller than two in the above case because we have that
$u \neq v $, $x \neq y $, and $c \neq d$. Thus, we can rewrite (\ref{eq::first_big_sum}) as:
%
\begin{align}
\sum_{{\{u,v\}\in\Delta(S)}} \ \sum_{{\{x,y\}\in\Delta(S)}}\ \sum_{{\{c,d\}\in\Delta(S)}} \hspace{0.03in} \frac{(r)_{\theta(u,v,x,y,c,d)}}{(s)_{\theta(u,v,x,y,c,d)}}
\cdot cost(u,\!v) \cdot cost(x,y) \cdot cost(c,d)
\label{eq::second_big_sum}
\end{align}
%
Hence, our goal now is to compute a sum whose elements are the product of costs of triples of paths.
Recall that for each of these paths, the end-nodes of the path are a pair of distinct tips in the tree.
Although the end-nodes of each path are distinct, in a given triple the paths may share one or more 
end-nodes with each other. Therefore, the distinct tips in any triple of paths may vary from 
two up to six tips. Indeed, in (\ref{eq::second_big_sum}) we get a sum where the triples of paths in 
the sum are partitioned in five groups; a triple of paths is assigned to a group depending on the number
of distinct tips in this triple. In~(\ref{eq::second_big_sum}) the sum for each group of triples is 
multiplied by the same factor $(r)_{\theta(u,v,x,y,c,d)}/(s)_{\theta(u,v,x,y,c,d)}$, 
hence we have to calculate the sum for each group of triples separately. 

However, when we try to calculate the sum for each of these groups 
of triples we see that this calculation is more involved; some of these groups of triples are 
divided into smaller subgroups, depending on which end-nodes of the paths in each triple are the same.
To explain this better, we can represent a triple of paths schematically as a graph; 
let $\{u,v\}, \{x,y\}, \{c,d\} \in \Delta(S)$ be three pairs of tips in $\mathcal{T}$. As mentioned already, the tips
within each pair are distinct, but tips between different pairs can be the same.  
\begin{figure}[h!]
\centering
\includegraphics[width=2.0in]{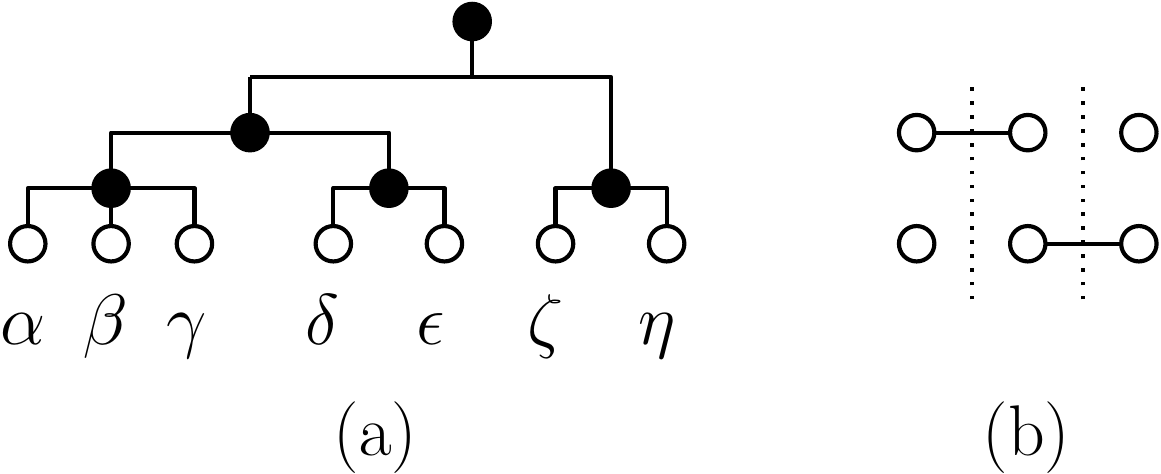}
\caption{ (a) A phylogenetic tree $\mathcal{T}$  and (b) an example of the tripartite graph induced by the triplet of 
its tip pairs $\{\alpha, \gamma\},$  $\{\delta, \gamma\},$ $\{\epsilon, \delta \},$ , where 
$\{\alpha, \gamma, \delta, \epsilon \} \subset S$. The dashed lines in the graph distinguish 
the partite subsets of vertices; the vertices of each
partite subset correspond to tips of the same pair.}
 \label{fig::example_graph}
\end{figure}
We represent the similarity between tips of these three pairs as a graph of six vertices.
Each vertex in the graph corresponds to a tip of these three pairs. Also, there exists an edge in this graph
between two vertices if the corresponding tips are the same. Thus, this graph is tripartite;
no vertices that correspond to tips of the same pair can be connected to each other with an edge. 
Hence, we have a tripartite graph where each partite set of vertices consists of two 
vertices--see Fig.~\ref{fig::example_graph} for an example. 

For any triple of pairs of tips $\{u,v\}$, $\{x,y\}$, $\{c,d\} \in \Delta(S)$ we denote
the tripartite graph that corresponds to this triple by $G[u,v,x,y,c,d]$. We call this graph
the \emph{similarity} graph of this triple. Based on the way that similarities may occur 
between tips in a triple of paths, we can 
partition the five groups of triples in~(\ref{eq::second_big_sum}) into smaller subgroups.
Each of these subgroups contains triples whose similarity graphs are isomorphic.  
For a tripartite graph that consists of three partite sets of two vertices each, 
there can be eight different isomorphism classes. Therefore, the five groups of triples are 
partitioned into eight subgroups. Figure~\ref{fig::isomorphisms} illustrates the eight isomorphism
classes that exist for the specific kind of tripartite graphs that we consider.
Since we refer to isomorphism classes, each of the graphs in Fig.~\ref{fig::isomorphisms} represents 
the combinatorial structure of the similarities between three pairs of tips, and it does not 
correspond to a particular planar embedding, or ordering of the tips.

\begin{figure}[h!]
\centering
\includegraphics[width=3.6in]{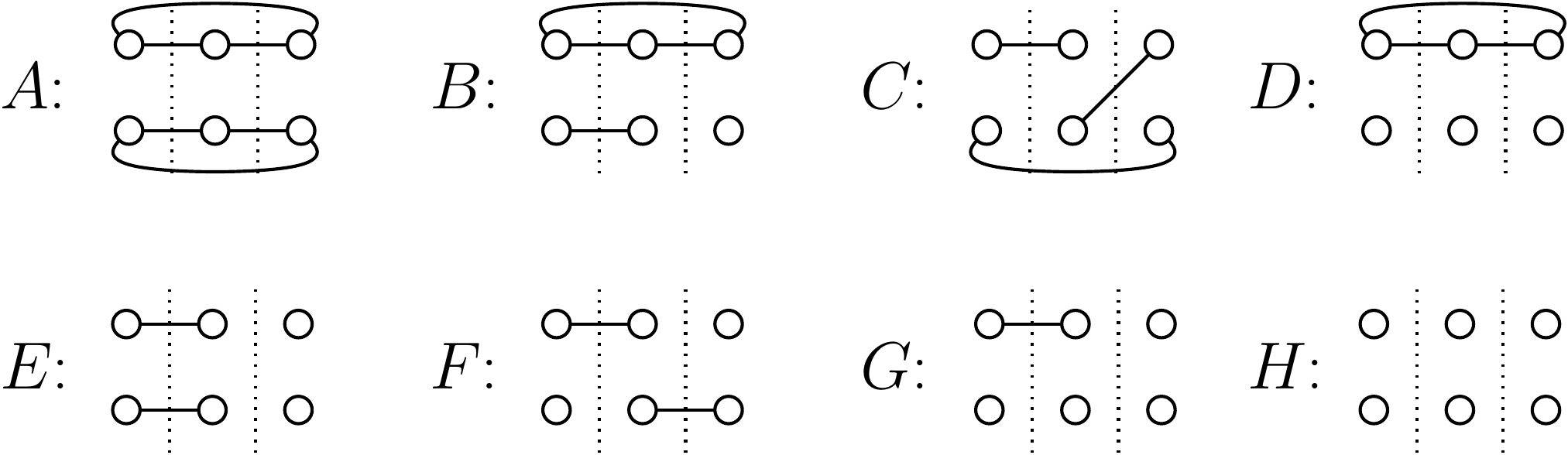}
\caption{ The eight isomorphism classes of a tripartite graph of $3 \times 2$ vertices
          that represent schematically the eight possible cases of similarities 
          between tips that we can have when we consider three paths between pairs of tips in a tree $\mathcal{T}$.}
\label{fig::isomorphisms}
\end{figure}

Let $X$ be any isomorphism class that is illustrated in Figure~\ref{fig::isomorphisms}.
We denote the set of all triples of pairs in $\Delta(S)$ whose similarity graphs belong to this class
by $\mathcal{B}_X$. More formally, the set $\mathcal{B}_X$ can be defined as follows :
%
\begin{align*}
\mathcal{B}_X = \{ \textrm{ } \{ \{u,v\}, \{x,y\}, \{c,d\} \} : \{u,v\}, \{x,y\}, \{c,d\} \in \Delta(S) \\[0.3em]
\textrm{ and } G[u,v,x,y,c,d] \textrm{ belongs to class $X$ in Figure~\ref{fig::isomorphisms} } \}  \ .
\end{align*}
%
We introduce also the following quantity:
\[
\ensuremath{\mathrm{TRS}}(X) = \sum_{ \{ \{u,v\}, \{x,y\}, \{c,d\} \} \in \mathcal{B}_X } cost(u,v) \cdot cost(x,y) \cdot cost(c,d) \ .
\]
Hence, we can rewrite~(\ref{eq::second_big_sum}) as follows:
{\small
\begin{align}
&  \frac{(r)_2}{(s)_2} \cdot \ensuremath{\mathrm{TRS}}(A) + 3 \cdot \frac{(r)_3}{(s)_3} \cdot \ensuremath{\mathrm{TRS}}(B) 
+ 6 \cdot \frac{(r)_3}{(s)_3} \cdot \ensuremath{\mathrm{TRS}}(C) 
+ 6 \cdot \frac{(r)_4}{(s)_4} \cdot \ensuremath{\mathrm{TRS}}(D) \nonumber \\[0.2em]
& +  3 \cdot \frac{(r)_4}{(s)_4}\cdot \ensuremath{\mathrm{TRS}}(E) + 6 \cdot \frac{(r)_4}{(s)_4} \cdot \ensuremath{\mathrm{TRS}}(F)
+ 6 \cdot \frac{(r)_5}{(s)_5} \cdot \ensuremath{\mathrm{TRS}}(G) + 6 \cdot \frac{(r)_6}{(s)_6} \cdot \ensuremath{\mathrm{TRS}}(H)
 \label{eq::le_big_sum}
\end{align}
}%

Notice that some of the terms $\frac{(r)_i}{(s)_i} \cdot \ensuremath{\mathrm{TRS}}(X)$ in~(\ref{eq::le_big_sum}) are 
multiplied with an extra constant factor. 
This happens for the following reason; the sum in $\ensuremath{\mathrm{TRS}}(X)$ counts each triple once for every 
different combination of three pairs of tips. However, in the triple sum in~(\ref{eq::second_big_sum})
some triples appear more than once. For example, every triple that belongs in class $B$ appears
three times in~(\ref{eq::second_big_sum}), hence there is an extra factor three in front of $\ensuremath{\mathrm{TRS}}(B)$
in~(\ref{eq::le_big_sum}).   
 
To compute efficiently $E_{R \in \mathrm{Sub}(S, r)}[\ensuremath{\mathrm{MPD}}^3(\mathcal{T},R)]$, it remains to 
compute efficiently each value $\ensuremath{\mathrm{TRS}}(X)$ for every isomorphism class $X$ that is 
presented in Figure~\ref{fig::isomorphisms}. Next we show in detail how we can do that 
by expressing each quantity $\ensuremath{\mathrm{TRS}}(X)$ as a function of the quantities that appear in Table~\ref{tab::quantities}.

For the triples that correspond to the isomorphism class $A$ we have:
%
\begin{align*}
\ensuremath{\mathrm{TRS}}(A) = \sum_{ \{u,v\} \in \Delta(S) } cost^3(u,v) = \ensuremath{\mathrm{CB}}(\mathcal{T}) \ .
\end{align*}
%
%
For $\ensuremath{\mathrm{TRS}}(B)$ we get:\vspace{.8em}\\
\resizebox{\linewidth}{!}{%
$\begin{array}{rl}
\displaystyle\ensuremath{\mathrm{TRS}}(B)&=\displaystyle\!\!\!\sum_{{\{u,v\}\in\Delta(S)}}\!\!\!cost^2(u,v)\left(
\sum_{x \in S \setminus \{u\} }\!\!cost(u,x) +\!\!\!\!\sum_{y \in S \setminus \{v\} }\!\!cost(v,y) - 2 \cdot cost(u,v) \right) \vspace{1em}\\
 &\displaystyle=\!\!\sum_{{\{u,v\}\in\Delta(S)}} cost^2(u,v) \left( \ensuremath{\mathrm{TC}}(u) + \ensuremath{\mathrm{TC}}(v) - 2 \cdot cost(u,v) \right)\vspace{1em}\\
 &\displaystyle=\!\!\sum_{u \in S }\ensuremath{\mathrm{SQ}}(u) \cdot \ensuremath{\mathrm{TC}}(u) - 2 \cdot \ensuremath{\mathrm{CB}}(\mathcal{T}) \ .
\end{array}$
}\bigskip\\
%
The quantity $\ensuremath{\mathrm{TRS}}(C)$ is equal to:
\begin{align}
& \frac{1}{6} \sum_{u \in S} \hspace{0.05in} \sum_{v \in S \setminus \{u\}}cost(u,v) 
\sum_{ x \in S \setminus \{u,v\} } cost(u,x)\cdot cost(x,v) \nonumber \\[0.2em]
 = & \frac{1}{6} \sum_{e \in E} w_e  \sum_{u \in S(e)} \hspace{0.05in} \sum_{v \in S - S(e)} \hspace{0.05in} \sum_{x \in S \setminus \{u,v\}} cost(u,x)\cdot cost(x,v) \ . \label{eq::nasty_0}   
\end{align}
For any $e \in E$ we have that:
\begin{align}
& \sum_{u \in S(e)} \hspace{0.05in} \sum_{v \in S - S(e)} \hspace{0.05in} 
\sum_{x \in S \setminus \{u,v\}} cost(u,x)\cdot cost(x,v) \nonumber\\[0.2em]
 &= \sum_{u \in S(e)} \hspace{0.05in} \sum_{v \in S \setminus \{u\}} \hspace{0.05in} 
 \sum_{x \in S \setminus \{u,v\}} cost(u,x) \cdot cost(x,v) \label{eq::nasty_1a} \\[0.2em]
&- 2 \sum_{\{u,v\}\in\Delta(S(e))}\hspace{0.05in} \sum_{x \in S \setminus \{u,v\} } cost(u,x) \cdot cost(x,v) \ .\tag{\ref{eq::nasty_1a}b} \label{eq::nasty_1b}
\end{align}
The first of the two sums in~(\ref{eq::nasty_1a}) can be written as:
\begin{align}
&\sum_{u \in S(e)} \hspace{0.05in} \sum_{v \in S \setminus \{u\}} \hspace{0.05in}
\sum_{x \in S \setminus \{u,v\}} cost(u,x) \cdot cost(x,v) \nonumber \\[0.2em]
	&=\sum_{u \in S(e)} \hspace{0.05in} \sum_{v \in S \setminus \{u\}} \hspace{0.05in}
\sum_{x \in S \setminus \{u,v\}} cost(u,v) \cdot cost(v,x) \nonumber \\[0.2em]
	&= \sum_{u \in S(e)} \hspace{0.05in} \sum_{v \in S \setminus \{u\} } (cost(u,v)
\cdot \ensuremath{\mathrm{TC}}(v) - cost^2(u,v)) \nonumber \\[0.2em]
	&= \sum_{u \in S(e)} \ensuremath{\mathrm{SM}}(u) - \ensuremath{\mathrm{SQ}}(u) \ .
\end{align}
According to Lemma~\ref{le::main_lemma}, we can compute $\ensuremath{\mathrm{SM}}(u)$ and $\ensuremath{\mathrm{SQ}}(u)$ for all tips $u \in S$ 
in linear time with respect to the size of $\mathcal{T}$. Given these values, we can compute 
$\sum_{u \in S(e)} \ensuremath{\mathrm{SM}}(u) - \ensuremath{\mathrm{SQ}}(u)$ for every edge $e \in E$ in $\mathcal{T}$ with a single bottom-up scan
of the tree.
For any edge $e$ in $E$, the second sum in~(\ref{eq::nasty_1b}) is equal to:
\begin{align}
& \sum_{ \{u, v\} \in \Delta(S(e))} \hspace{0.05in} 
\sum_{ x \in S \setminus \{u ,v\} } cost(u,x) \cdot cost(x,v) \nonumber \\[0.2em]
&= \sum_{ \{u, v\} \in \Delta(S(e))} \hspace{0.05in}
\sum_{ x \in S(e) \setminus \{u ,v\} }  cost(u,x) \cdot cost(x,v) \label{eq::nasty_2a} \\[0.2em]
&+ \sum_{ \{u, v\} \in \Delta(S(e))} \hspace{0.05in}
\sum_{ x \in S \setminus S(e) }  cost(u,x) \cdot cost(x,v) \ .
\tag{\ref{eq::nasty_2a}b}\label{eq::nasty_2b}
\end{align}
We can express the first sum in~(\ref{eq::nasty_2a}) as:
\begin{align}
&\sum_{ \{u, v\} \in \Delta(S(e))} \hspace{0.05in}
\sum_{ x \in S(e) \setminus \{u ,v\} }  cost(u,x) \cdot cost(x,v)   \nonumber \\[0.2em]
&= \frac{1}{2}  \sum_{ u \in S(e) } \left( \sum_{v \in S(e) \setminus \{ u \} } cost(u,v) \right)^2
 - \frac{1}{2} \sum_{u \in S(e)} \sum_{v\in S(e) \setminus \{u\}} cost^2(u,v) \nonumber \\[0.2em]
&= \frac{1}{2} \ensuremath{\mathrm{QD}}(e)  - \frac{1}{2} \sum_{u \in S(e)} \sum_{v\in S(e) \setminus \{u\}} cost^2(u,v)
 \ . \label{eq::nasty_25}
\end{align}
The last sum in~(\ref{eq::nasty_25}) is equal to:
\begin{align}
& \sum_{u \in S(e)} \sum_{v\in S(e) \setminus \{u\}} cost^2(u,v) = \sum_{u \in S(e)}\ensuremath{\mathrm{SQ}}(u) - \ensuremath{\mathrm{SQ}}(e).
 \ . \label{eq::nasty_3}
\end{align} 
The value of the sum $\sum_{u \in S(e)}\ensuremath{\mathrm{SQ}}(u)$ can be computed for every edge $e$ in $\Theta(n)$ time 
in total as follows; for every tip $u \in S$ we store $\ensuremath{\mathrm{SQ}}(u)$ together with this tip, and then scan bottom-up
the tree adding those values that are in the subtree of each edge. 
For the remaining part of~(\ref{eq::nasty_2b}) we get:
\begin{align}
& \sum_{ \{u, v\} \in \Delta(S(e))} \hspace{0.05in} \sum_{ x \in S \setminus S(e) }  cost(u,x) \cdot cost(x,v)
\nonumber \\[0.2em] 
 &=\sum_{ \{u, v\} \in \Delta(S(e))} \hspace{0.05in} \sum_{ x \in S \setminus S(e) }
\left( cost(u,e) + cost(x,e) \right) \left( cost(v,e) + cost(x,e) \right) \nonumber \\[0.2em]
 &=\sum_{ \{u, v\} \in \Delta(S(e))} \hspace{0.05in} \sum_{ x \in S \setminus S(e) } cost(u,e) \cdot cost(v,e)
\nonumber \\[0.2em]
 &+\sum_{ \{u, v\} \in \Delta(S(e))} \hspace{0.05in} \sum_{ x \in S \setminus S(e) } cost(x,e) \cdot
(cost(u,e)+ cost(v,e)) \nonumber \\[0.2em]
 &+\sum_{ \{u, v\} \in \Delta(S(e))} \hspace{0.05in} \sum_{ x \in S \setminus S(e) } cost^2(x,e)
 \ . \label{eq::nasty_4}
\end{align}
The first sum in~(\ref{eq::nasty_4}) is equal to:
\begin{align}
\sum_{{\{u, v\}\in\Delta(S(e))}} \hspace{0.05in} \sum_{ x \in S \setminus S(e) } \!\!cost(u,e) \cdot cost(v,e) =
(s-s(e))\left( \ensuremath{\mathrm{TC_{sub}}}^2(e) - \ensuremath{\mathrm{SQ_{sub}}}(e) \right)\,. \label{eq::nasty_5}
\end{align}
For the second sum in~(\ref{eq::nasty_4}) we have:
{\small
\begin{align}
& \sum_{ \{u, v\} \in \Delta(S(e))} \hspace{0.05in} \sum_{ x \in S \setminus S(e) } cost(x,e) \cdot 
\big(cost(u,e)+ cost(v,e)\big) \nonumber \\[0.2em]
 = & \big(s(e)-1\big)\!\!\sum_{x \in S \setminus S(e) }\!\!cost(x,e) \cdot \ensuremath{\mathrm{TC_{sub}}}(e)
 = \big(s(e)-1\big) \cdot \ensuremath{\mathrm{TC_{sub}}}(e) \cdot \big(\ensuremath{\mathrm{PC}}(e) - \ensuremath{\mathrm{TC_{sub}}}(e) \big)\,. \label{eq::nasty_6}
\end{align}
}
The last sum in~(\ref{eq::nasty_4}) can be written as:
\begin{align}
& \sum_{ \{u, v\} \in \Delta(S(e))} \hspace{0.05in} \sum_{ x \in S \setminus S(e) } cost^2(x,e)
= \frac{s(e)(s(e)-1)}{2}\left( \ensuremath{\mathrm{PSQ}}(e) - \ensuremath{\mathrm{SQ_{sub}}}(e) \right)
\ . \label{eq::nasty_7}
\end{align}
Combining the analyses that we did from (\ref{eq::nasty_0}) up to (\ref{eq::nasty_7}) we get:
%
\begin{align*}
 \ensuremath{\mathrm{TRS}}(C) & = \frac{1}{6} \sum_{e \in E} w_e \sum_{u \in S(e)} \left( \ensuremath{\mathrm{SM}}(u) - \frac{3}{2}\ensuremath{\mathrm{SQ}}(u) \right) 
+ \frac{1}{2} \cdot \ensuremath{\mathrm{QD}}(e) + \frac{1}{2} \cdot \ensuremath{\mathrm{SQ}}(e) \nonumber \\[0.2em]
& + (s-2s(e)+1) \cdot \ensuremath{\mathrm{TC_{sub}}}^2(e) 
- \frac{2s - 2\cdot s(e)+ s(e)(s(e)-1)}{2} \cdot \ensuremath{\mathrm{SQ_{sub}}}(e) \nonumber \\[0.2em]
& + (s(e)-1) \cdot \ensuremath{\mathrm{TC_{sub}}}(e) \cdot \ensuremath{\mathrm{PC}}(e) + \frac{s(e)(s(e)-1)}{2} \cdot \ensuremath{\mathrm{PSQ}}(e)\; .
\end{align*}
%
%
The value of $\ensuremath{\mathrm{TRS}}(D)$ can be expressed as:
\begin{align*}
& \sum_{ u \in S } \sum_{ \substack{ v,x,y \in S \setminus \{u\} \\ v,x,y \textrm{ are distinct }  } }
cost(u,v) \cdot cost(u,x) \cdot cost(u,y)\\
&= \frac{1}{6} \bigg( \sum_{u \in S} \ensuremath{\mathrm{TC}}^3(u)
- 2 \cdot \ensuremath{\mathrm{TRS}}(A) - 3 \cdot \ensuremath{\mathrm{TRS}}(B) \bigg) \\[0.2em]
 &=\frac{1}{6} \cdot \sum_{u \in S} \ensuremath{\mathrm{TC}}^3(u) + \frac{2}{3} \cdot \ensuremath{\mathrm{CB}}(\mathcal{T}) - \frac{1}{2} \cdot \ensuremath{\mathrm{SQ}}(u) \cdot \ensuremath{\mathrm{TC}}(u)  \ .
\end{align*}
For $\ensuremath{\mathrm{TRS}}(E)$ we get:
\begin{align*}
&\sum_{ \{u,v\} \in \Delta(S) } \hspace{0.05in} \sum_{ \{x,y \} \in \Delta(S \setminus \{u,v\} ) } cost^2(u,v) \cdot cost(x,y)
\\[0.2em]
 &= \sum_{ \{u,v\} \in \Delta(S) } cost^2(u,v) ( \ensuremath{\mathrm{TC}}(\mathcal{T}) - \ensuremath{\mathrm{TC}}(u) - \ensuremath{\mathrm{TC}}(v) + cost(u,v) ) \\[0.2em]
 &=  \ensuremath{\mathrm{TC}}(\mathcal{T}) \sum_{ e \in E } w_e \cdot \ensuremath{\mathrm{TC}}(e)  - \sum_{u \in S} \left( \ensuremath{\mathrm{SQ}}(u) \cdot \ensuremath{\mathrm{TC}}(u) \right) + \ensuremath{\mathrm{CB}}(\mathcal{T}) \ .
\end{align*}
We can rewrite $\ensuremath{\mathrm{TRS}}(F)$ as follows:
%
\begin{align*}
&  \sum_{{\{u,v\}\in \Delta(S)}} \!\! cost(u,v)\Bigg( \ensuremath{\mathrm{TC}}(u) \cdot \ensuremath{\mathrm{TC}}(v) - cost^2(u,v)
-\!\!\!\!\sum_{x \in S \setminus \{u,v\} }\!\!\!\!cost(u,x)\cdot cost(x,v) \!\Bigg) \\[0.6em]
 &=\!\sum_{ \{u,v\} \in \Delta(S) } cost(u,v) \cdot \ensuremath{\mathrm{TC}}(u) \cdot \ensuremath{\mathrm{TC}}(v) - \ensuremath{\mathrm{CB}}(\mathcal{T}) - 3 \cdot \ensuremath{\mathrm{TRS}}(C) \\[0.6em]
 &=  \sum_{ e \in E } w_e \cdot \ensuremath{\mathrm{Mult}}(e) - \ensuremath{\mathrm{CB}}(\mathcal{T}) - 3 \cdot \ensuremath{\mathrm{TRS}}(C) \ .
\end{align*}
%
%
For the value of $\ensuremath{\mathrm{TRS}}(G)$ we have:
\begin{align}
\ensuremath{\mathrm{TRS}}(G) =&  \frac{1}{2}\sum_{{\{u,v\}\in \Delta(S)}} cost(u,v) \sum_{x \in S \setminus \{u,v\} }
\big( cost(u,x) + cost(v,x) \big) \Bigg(\ensuremath{\mathrm{TC}}(\mathcal{T}) \nonumber \\[0.2em]
&- \ensuremath{\mathrm{TC}}(u) -\ensuremath{\mathrm{TC}}(v) - \ensuremath{\mathrm{TC}}(x) + cost(u,v) + cost(u,x) + cost(v,x) \Bigg) \ . \label{eq::case_G}
\end{align}
We now break the sum in~(\ref{eq::case_G}) into five pieces and express each piece of 
this sum in terms of the quantities in Table~\ref{tab::quantities}. The first piece of 
the sum is equal to:
\begin{align*}
& \frac{1}{2} \sum_{ \{u,v\} \in \Delta(S) } cost(u,v) \sum_{x \in S \setminus \{u,v\} } 
\left( cost(u,x) + cost(v,x) \right) \cdot \ensuremath{\mathrm{TC}}(\mathcal{T}) \nonumber \\[0.8em]
  &=  \frac{1}{2} \cdot \ensuremath{\mathrm{TC}}(\mathcal{T})  \sum_{u \in S}\ensuremath{\mathrm{TC}}^2(u) - \sum_{ \{ u,v\} \in \Delta(S)} cost^2(u,v)\\[.6em]
   &= \frac{1}{2} \cdot \ensuremath{\mathrm{TC}}(\mathcal{T})  \sum_{u \in S}\ensuremath{\mathrm{TC}}^2(u) - \sum_{e \in E} w_e \cdot \ensuremath{\mathrm{TC}}(e) \ . 
\end{align*}
The second piece that we take from the sum in~(\ref{eq::case_G}) can be expressed as:
\begin{align}
& -\frac{1}{2} \sum_{ \{u,v\} \in \Delta(S) } cost(u,v) \sum_{x \in S \setminus \{u,v\} } 
\left( cost(u,x) + cost(v,x) \right)  \left( \ensuremath{\mathrm{TC}}(u) +\ensuremath{\mathrm{TC}}(v) \right) \nonumber \\[.2em]
=&-\frac{1}{2} \sum_{ \{u,v\} \in \Delta(S) } cost(u,v) \left( \ensuremath{\mathrm{TC}}(u) +\ensuremath{\mathrm{TC}}(v) - 2 \cdot cost(u,v) \right)  \left( \ensuremath{\mathrm{TC}}(u) +\ensuremath{\mathrm{TC}}(v) \right) \nonumber \\[.2em]
=&-\frac{1}{2} \sum_{ \{u,v\} \in \Delta(S) } cost(u,v) \Big( \ensuremath{\mathrm{TC}}^2(u) + \ensuremath{\mathrm{TC}}^2(u)
+ 2 \cdot \ensuremath{\mathrm{TC}}(u) \cdot \ensuremath{\mathrm{TC}}(v) \nonumber \\
&-2\cdot cost(u,v) \cdot \big(\ensuremath{\mathrm{TC}}(u) + \ensuremath{\mathrm{TC}}(v)\big) \Big)
\nonumber \\[.2em]
=&-\frac{1}{2} \sum_{u \in S }\ensuremath{\mathrm{TC}}^3(u) -  \sum_{ \{v,x\} \in \Delta(S) } cost(v,x) \cdot \ensuremath{\mathrm{TC}}(v) \cdot \ensuremath{\mathrm{TC}}(x) \nonumber \\
&+\sum_{ \{y,z\} \in \Delta(S) } cost^2(y,z) \big( \ensuremath{\mathrm{TC}}(y) + \ensuremath{\mathrm{TC}}(z) \big) \nonumber \\[.2em]
=&-\frac{1}{2} \sum_{u \in S }\ensuremath{\mathrm{TC}}^3(u) - \sum_{ e \in E } w_e \cdot \ensuremath{\mathrm{Mult}}(e)
+ \sum_{ u \in S } \ensuremath{\mathrm{SQ}}(u) \cdot \ensuremath{\mathrm{TC}}(u) \ . \label{eq::case_G_2}
\end{align}
The next piece that we select from~(\ref{eq::case_G}) is equal to:
\begin{align}
&-\frac{1}{2} \sum_{ \{u,v\} \in \Delta(S) } cost(u,v) \sum_{x \in S \setminus \{u,v\} }
\big( cost(u,x) + cost(v,x) \big) \cdot \ensuremath{\mathrm{TC}}(x) \nonumber \\
=&-\frac{1}{2} \sum_{ u \in S } \hspace{0.05in} \sum_{ v \in S \setminus \{u\} } cost(u,v)
\sum_{x \in S \setminus \{u\}} cost(u,x) \cdot \ensuremath{\mathrm{TC}}(x) \nonumber \\
&+\frac{1}{2} \sum_{u \in S} \hspace{0.05in}
\sum_{ v \in S \setminus \{u\} } cost^2(u,v) \cdot \ensuremath{\mathrm{TC}}(v)  \nonumber \\[.2em]
=&-\frac{1}{2} \sum_{ u \in S } \ensuremath{\mathrm{TC}}(u) \cdot \ensuremath{\mathrm{SM}}(u) +
\frac{1}{2} \sum_{ \{u,v\} \in \Delta(S) } cost^2(u,v) \left( \ensuremath{\mathrm{TC}}(u) + \ensuremath{\mathrm{TC}}(v) \right)
\nonumber \\[.2em]
=&\frac{1}{2} \sum_{ u \in S }  \ensuremath{\mathrm{TC}}(u) \big(\ensuremath{\mathrm{SQ}}(u) -  \ensuremath{\mathrm{SM}}(u) \big)  \ .  \label{eq::case_G_3}
\end{align}
For the fourth piece of the sum in~(\ref{eq::case_G}) we get:
\begin{align}
& \frac{1}{2} \sum_{ \{u,v\} \in \Delta(S) } cost^2(u,v) 
\sum_{x \in S \setminus \{u,v\} } cost(u,x) + cost(v,x) \\[0.1in]
= & \frac{1}{2} \cdot \ensuremath{\mathrm{TRS}}(B) 
 =  \frac{1}{2} \sum_{ u \in S } \ensuremath{\mathrm{SQ}}(u) \cdot \ensuremath{\mathrm{TC}}(u) - \ensuremath{\mathrm{CB}}(\mathcal{T}) \ . \label{eq::case_G_4}
\end{align}
The last piece of the sum in~(\ref{eq::case_G}) can be expressed as:
\begin{align}
& \frac{1}{2} \sum_{ \{u,v\} \in \Delta(S) } cost(u,v) \sum_{x \in S \setminus \{u,v\} } 
\left( cost(u,x) + cost(v,x) \right)^2
\nonumber \\[0.1in]
 = & \frac{1}{2} \sum_{ \{u,v\} \in \Delta(S) } cost(u,v) \sum_{x \in S \setminus \{u,v\} } 
\left( cost^2(u,x) + cost^2(v,x) \right)
+ 3\cdot \ensuremath{\mathrm{TRS}}(C) \nonumber \\[0.1in]
 = & \frac{1}{2} \sum_{ \{u,v\} \in \Delta(S) } cost(u,v) 
\left( \ensuremath{\mathrm{SQ}}(u) + \ensuremath{\mathrm{SQ}}(v) - 2 \cdot cost^2(u,v) \right) + 3 \cdot \ensuremath{\mathrm{TRS}}(C) 
\nonumber \\[0.1in]
 = & \frac{1}{2} \sum_{u \in S} \ensuremath{\mathrm{SQ}}(u) \cdot \ensuremath{\mathrm{TC}}(u) -  \ensuremath{\mathrm{CB}}(\mathcal{T})  + 3 \cdot \ensuremath{\mathrm{TRS}}(C) \ .
\label{eq::case_G_5}
\end{align}
Combining our analyses from (\ref{eq::case_G}) up to (\ref{eq::case_G_5}) we get:
%
\begin{align*}
 \ensuremath{\mathrm{TRS}}(G) & =  \frac{1}{2} \cdot \ensuremath{\mathrm{TC}}(\mathcal{T}) \sum_{u \in S}\ensuremath{\mathrm{TC}}^2(u) -\sum_{e \in E} w_e \left( \ensuremath{\mathrm{TC}}(e) + \ensuremath{\mathrm{Mult}}(e) \right)
\nonumber \\[0.1in]
& + \frac{1}{2} \sum_{u \in S}\ensuremath{\mathrm{TC}}(u)\!\cdot\!\big( 5\cdot \ensuremath{\mathrm{SQ}}(u)\!-\!\ensuremath{\mathrm{SM}}(u)-\!\ensuremath{\mathrm{TC}}^2(u) \big)
-2 \cdot\!\ensuremath{\mathrm{CB}}(T)+\!3\!\cdot\!\ensuremath{\mathrm{TRS}}(C)\,.
\end{align*}
%
%
We can express $\ensuremath{\mathrm{TRS}}(H)$ using the values of the other isomorphism classes:
\begin{align*}
& \ensuremath{\mathrm{TRS}}(H) = \frac{1}{6} \sum_{{\{u,v\}\in\Delta(S)}} \sum_{{\{x,y\}\in \Delta(S)}}
\sum_{{\{c,d\}\in\Delta(S)}} cost(u,v) \cdot cost(x,y) \cdot cost(c,d) \\[0.1in]
&- \ensuremath{\mathrm{TRS}}(A) - 3 \cdot \ensuremath{\mathrm{TRS}}(B) - 6 \cdot\!\ensuremath{\mathrm{TRS}}(C) - 6 \cdot \ensuremath{\mathrm{TRS}}(D) \\[0.1in] 
& - 3 \cdot \ensuremath{\mathrm{TRS}}(E) -6 \cdot \ensuremath{\mathrm{TRS}}(F) 
- 6 \cdot \ensuremath{\mathrm{TRS}}(G) \nonumber \\[0.1in]
& = \frac{1}{6} \cdot \ensuremath{\mathrm{TC}}^3(\mathcal{T}) -  \frac{1}{6} \cdot \ensuremath{\mathrm{TRS}}(A) - \frac{1}{2} \cdot \ensuremath{\mathrm{TRS}}(B) 
- \ensuremath{\mathrm{TRS}}(C) - \ensuremath{\mathrm{TRS}}(D)  \\[0.1in]
& -\frac{1}{2} \cdot \ensuremath{\mathrm{TRS}}(E)
 - \ensuremath{\mathrm{TRS}}(F) - \ensuremath{\mathrm{TRS}}(G) \ .
\end{align*}

We get the value of $E_{R \in \mathrm{Sub}(S, r)}[\ensuremath{\mathrm{MPD}}^3(\mathcal{T},R)]$ by plugging into (\ref{eq::le_big_sum}) 
the values that we got for all eight isomorphism classes of triples. For any isomorphism class $X$ 
we showed that the value $\ensuremath{\mathrm{TRS}}(X)$ can be computed by using the quantities in Table~\ref{tab::quantities}. 
The lemma follows from the fact that each quantity that appears in this table is used a constant number
of times for computing value $\ensuremath{\mathrm{TRS}}(X)$ for any class $X$, and since we showed that we can precompute all these 
quantities in $\Theta(n)$ time in total. \qed
\end{proof}

\begin{theorem}
Let $\mathcal{T}$ be a phylogenetic tree that contains $s$ tips, and let $r$ be a natural number with $r \leq s$.
The skewness of the mean pairwise distance on $\mathcal{T}$ among all subsets of exactly $r$ tips of $\mathcal{T}$
can be computed in  $\Theta(n)$ time.
\end{theorem}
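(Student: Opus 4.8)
The plan is to assemble the skewness value directly from the closed-form expression in~(\ref{eq::skewness_basic}), which writes $\ensuremath{\mathrm{sk}}(\mathcal{T},r)$ purely in terms of three ingredients: $\ensuremath{\mathrm{expec}}(\mathcal{T},r)$, $\ensuremath{\mathrm{var}}(\mathcal{T},r)$, and $E_{R \in \mathrm{Sub}(S, r)}[\ensuremath{\mathrm{MPD}}^3(\mathcal{T},R)]$. First I would invoke the results of our previous paper~\cite{ourpaper}, which show that both $\ensuremath{\mathrm{expec}}(\mathcal{T},r)$ and $\ensuremath{\mathrm{var}}(\mathcal{T},r)$ can be computed in $\Theta(n)$ time after a constant number of scans of $\mathcal{T}$. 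Next I would apply Lemma~\ref{le::main_lemma}, which gives $E_{R \in \mathrm{Sub}(S, r)}[\ensuremath{\mathrm{MPD}}^3(\mathcal{T},R)]$ in $\Theta(n)$ time; internally this relies on Lemma~\ref{le::fast_quantities} to precompute in $\Theta(n)$ time all the quantities of Table~\ref{tab::quantities}, and then on combining a constant number of them (with the falling-factorial coefficients $(r)_k/(s)_k$, each evaluable in $O(1)$ per term, or after an $O(r)=O(n)$ preprocessing of factorials) to form the eight values $\ensuremath{\mathrm{TRS}}(A),\dots,\ensuremath{\mathrm{TRS}}(H)$ and finally the sum in~(\ref{eq::le_big_sum}).

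Once these three numbers are in hand, the skewness is obtained by a constant number of additional arithmetic operations according to~(\ref{eq::skewness_basic}); this step costs $O(1)$. Summing the costs, the whole computation runs in $O(n)$ time. For the matching lower bound I would observe that any algorithm computing $\ensuremath{\mathrm{sk}}(\mathcal{T},r)$ must in the worst case inspect the weight of every edge of $\mathcal{T}$ — altering a single edge weight changes the path costs $cost(u,v)$ and hence the skewness — so $\Omega(n)$ time is necessary, yielding the tight $\Theta(n)$ bound.

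There is essentially no obstacle left at this stage: all the genuine work has been front-loaded into Lemma~\ref{le::fast_quantities} and Lemma~\ref{le::main_lemma}. The only point needing a word of care is bookkeeping rather than algorithmic — ensuring the coefficients $(r)_k/(s)_k$ and the multiplicities ($3$ in front of $\ensuremath{\mathrm{TRS}}(B)$ and of $\ensuremath{\mathrm{TRS}}(E)$, $6$ in front of $\ensuremath{\mathrm{TRS}}(C)$, $\ensuremath{\mathrm{TRS}}(D)$, $\ensuremath{\mathrm{TRS}}(F)$, $\ensuremath{\mathrm{TRS}}(G)$, $\ensuremath{\mathrm{TRS}}(H)$) are handled exactly, so that the returned value is the true skewness and not merely an approximation — but this is immediate from the derivation in the proof of Lemma~\ref{le::main_lemma}. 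Hence the theorem follows.
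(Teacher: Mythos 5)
Your proposal is correct and follows essentially the same route as the paper's own proof: cite the earlier work for $\ensuremath{\mathrm{expec}}(\mathcal{T},r)$ and $\ensuremath{\mathrm{var}}(\mathcal{T},r)$ in $\Theta(n)$ time, apply Lemma~\ref{le::main_lemma} for $E_{R \in \mathrm{Sub}(S, r)}[\ensuremath{\mathrm{MPD}}^3(\mathcal{T},R)]$, and combine via~(\ref{eq::skewness_basic}). Your added $\Omega(n)$ lower-bound remark is a harmless (and reasonable) supplement that the paper leaves implicit.
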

\begin{proof}
According to the definition of skewness, as it is also presented in~(\ref{eq::skewness_basic}),
we need to prove that we can compute in $\Theta(n)$ time the expectation and the variance of 
the $\ensuremath{\mathrm{MPD}}$, and the value of the expression $E_{R \in \mathrm{Sub}(S, r)}[\ensuremath{\mathrm{MPD}}^3(\mathcal{T},R)]$.
In a previous paper we showed that the expectation and the variance of the $\ensuremath{\mathrm{MPD}}$ can be computed
in $\Theta(n)$ time. By combining this with
Lemma~\ref{le::main_lemma} we get the proof of the theorem. \qed
\end{proof}




\begin{thebibliography}{10}
\providecommand{\url}[1]{\texttt{#1}}
\providecommand{\urlprefix}{URL }
%
\bibitem{cooper}
Cooper,N., Rodr\'{i}guez,J., Purvis,A.: A Common Tendency for Phylogenetic Overdispersion in Mammalian Assemblages. In: Proceedings of the Royal Society B: Biological Sciences, vol. 275, pp. 2031--2037 (2008)

\bibitem{harmon}
Harmon-Threatt,A.N., AckerlyD.D.: Filtering Across Spatial Scales: Phylogeny, Biogeography and Community Structure in Bumble Bees. PLoS ONE 8:e60446, (2013)

\bibitem{pontarp}
PontarpM., Canb\"{a}ck,B., Tunlid,A., Lundberg,P.: Phylogenetic Analysis Suggests that Habitat Filtering Is Structuring Marine Bacterial Communities Across the Globe. Microbial Ecology 64, 8--17 (2012)

\bibitem{swenson}
Swenson,N.G.: Phylogenetic Beta Diversity Metrics. Trait Evolution and Inferring the Functional Beta Diversity of Communities. PLoS ONE 6(6):e21264, (2011)

\bibitem{ourpaper}
Tsirogiannis,C., Sandel,B. Cheliotis,D.: Efficient Computation of Popular Phylogenetic Tree Measures.
In: Proc. Workshop on Algorithmms in Bioinformatics (WABI), pp. 30--43 (2012)

\bibitem{vamosi}
Vamosi,J.C., VamosiS.M.: Body Size, Rarity, and Phylogenetic Community Structure: Insights from Diving Beetle Assemblages of Alberta. Diversity and Distributions  13, 1--10 (2007)

\bibitem{webb1}
Webb,C.O., Ackerly,D.D., McPeek,M.A., Donoghue,M.J.: Phylogenies and Community Ecology. Annual Review of Ecology and Systematics 33, 475--505 (2002)

\end{thebibliography}
\end{document}